\documentclass{article}

\usepackage[nonatbib, main, final]{neurips_2026} 

\usepackage[utf8]{inputenc} 
\usepackage[T1]{fontenc}    
\usepackage{hyperref}       
\usepackage{url}            
\usepackage{booktabs}       
\usepackage{amsfonts}       
\usepackage{nicefrac}       
\usepackage{microtype}      
\usepackage{xcolor}         

\usepackage{amsmath,graphicx,hyperref}
\usepackage{bm}
\usepackage{amsthm}
\newtheorem{theorem}{Theorem}

\usepackage{xspace}

\usepackage[
    style = ieee,
    doi=false,isbn=false,url=false,eprint=false
]{biblatex}
\DeclareMathOperator{\Tr}{Tr}                                 
\newcommand{\x}{\ensuremath{\mathbf{x}}\xspace}               
\newcommand{\bxS}{\ensuremath{\x_\mathcal{S}}\xspace}         
\newcommand{\ct}{\ensuremath{\mathbf{c}(t)}\xspace}           
\newcommand{\Lcano}{\ensuremath{L_{\text{cano}}}\xspace}

\title{Local Diagnostics of Continuous Normalizing Flow\\for Out-of-Distribution Detection }

\author{%
  Xinwei Cao
    \\
  Department of Electronic Systems\\
  Norwegian University of Science and Technology\\
  Trondheim, Norway \\
  \texttt{xinwei.cao@ntnu.no} \\
  \And
  Mengxuan Lu \\
  School of Future Technology \\
  South China University of Technology \\
  Guangzhou, China \\
  \texttt{ftlumengxuan@mail.scut.edu.cn} \\
  \AND
  Torbjørn Svendsen \\
  Department of Electronic Systems\\
  Norwegian University of Science and Technology\\
  Trondheim, Norway \\
  \texttt{torbjorn.svendsen@ntnu.no} \\
  \And
  Giampiero Salvi \\
  Department of Electronic Systems\\
  Norwegian University of Science and Technology\\
  Trondheim, Norway \\
  \texttt{giampiero.salvi@ntnu.no} \\
}

\begin{document}

\maketitle
  
\begin{abstract}
We address the problem of out-of-distribution (OOD) detection for target observations embedded in a subspace of the high dimensional data space.
Using continuous normalizing flows~(CNFs), we propose a Lagrangian sub-flow~(LSF) framework designed to isolate and estimate the density for the relevant components in the representation and using the remaining components as context.
Through experimentation with models for speech synthesis, we show that CNFs, similarly to other deep generative models (DGMs), are susceptible to the ``likelihood paradox'', where high likelihood is erroneously assigned to OOD samples.
This is attributed to the inductive bias of DGMs that prioritize low-level structural details over high-level semantic coherence.
To mitigate this phenomenon, we propose a number of geometric diagnostic signals based on the velocity field over the sub-flow trajectory.
Based on these signals, we design metrics for the challenging task of zero-shot phoneme-level mispronunciation detection.
Finally, we demonstrate the superiority of these metrics compared to likelihood-based methods on a real-world mispronunciation detection benchmark.
\end{abstract}

\section{Introduction}
\label{sec:intro}
Deep generative models~(DGMs) have demonstrated an exceptional ability to model complex data distributions.
Yet, their applications in out-of-distribution (OOD) detection underperform dedicated discriminative models due to the so called ``likelihood paradox'' effect. 
As identified by \citeauthor{nalisnick19}~\cite{nalisnick19}, likelihood-based models often assign higher density to OOD samples than to in-distribution data.
For example, a model trained on images from CIFAR-10~\cite{Krizhevsky2009_CIFAR_dataset}, may assign higher likelihoods to images from another domain, such as SVHN~\cite{NetzerEtAl2011_SVHN_dataset}.
In complement to it, Ren et al.\cite{ren2019likelihood} have found that standard likelihood scores are often dominated by background statistics rather than the ``semantic" features that actually define a class.
Similarly, Kirichenko et al.~\cite{kirichenko2020why} provided a geometric explanation, suggesting that DGMs such as normalizing flows~(NF) fail because they prioritize low-level pixel statistics over high-level semantics. 

Efforts have been made to mitigate the ``likelihood paradox'' issue.
For example, researchers managed to use likelihood ratios~(\cite{ren2019likelihood}) or input complexity\cite{serra2020input} to compensate the bias towards background noise.
Le et al.\cite{lelan2021geometric} gave a geometric explanation of why NF models fail to detect simple OOD samples and proposed to estimate the local intrinsic dimensions~(LID) and augment the likelihood with additional criteria.

The above mentioned methods treat DGMs as static probability estimators whilst ignoring the rich dynamical information embedded in the training and inference of the model itself.
Xiao et al.\cite{xiao2020likelihood} proposed to use the likelihood-ratio before and after a training step of DGMs to distinguish between In-domain~(ID) and OOD samples.
More recently, Diffusion models, which have dominated the field of content generation on all modalities, have been applied for OOD detection, e.g, MSMA~\cite{jolicoeur2021msma} or DiffGuard~\cite{gao2023diffguard}.
The former approach is based on the trained score-function evaluated at each point on the noise adding process~(the diffusion path). 
The latter runs the data to noise path first forwards and then procedes backwards with an invertible diffusion model. 
OOD detection is performed by analyzing the property of the reconstructed point.
Latent ODE models~\cite{rubanova2019latent} can be used to model the evolution of time series in the latent space, e.g., the temperature over a period of time.
Morningstar et al.~\cite{morningstar2021density} defined a dynamics prior with a latent ODE model to distinguish between normal and abnormal time-series.

Another powerful category of DGMs, namely the continuous normalizing flow (CNF), is however left understudied for OOD detection. 
One of the major concerns of using CNF for OOD detection is the computational complexity since CNF usually requires multiple steps or number of function evaluations~(NFE) to evaluate the likelihood.
However, we believe that the theoretical analogy between CNF and fluid dynamics offers a powerful perspective for extracting diagnostic signals. 
Moving beyond the holistic perspective of the flow, we argue that the true diagnostic power of CNF lies in the localized, fine-grained trajectories within the high-dimensional manifold, or the sub-flow dynamics.
Due to averaging over high-dimensional representations, in real-world applications, OOD signals rarely correspond to very low likelihoods.
By decomposing the global flow into context-specific sub-flows, we can capture the structural incoherence that is otherwise masked by global averaging. 

There are many application domains of sub-flow diagnostics.
In speech generation, for example, a sub-flow can represent the trajectory of a specific phoneme while the global flow represents global information such as environmental noise and speaker's identity;
In computer vision and text-to-image generation, sub-flows correspond to spatial-semantic regions and analyzing these localized flows may allow for the detection of ``hallucinations" where a specific object’s internal geometry contradicts the global scene;
Furthermore, detecting localized OOD patches in medical imaging may be useful to detect a disease, where the rest of the image acts as calibration for the specific patient.
Finally, in bio-informatics and time-series analysis, anomalies in these sub-flows can signal mutations or transient regime shifts under a global trend.

In this work, we investigate the possibilities of using CNF for OOD detection.
More specifically, we focus on the task of variable scale sub-flow OOD detection within the holistic flow. 
We make the following contributions:
\begin{itemize}
    \item We identify the global coupling issue in modern Transformer-based CNFs and propose the Lagrangian sub-flow~(LSF) framework to inspect local properties of the sub-flow while preserving global context awareness from the original flow.
    \item We provide several stochastic tools that efficiently separate local diagnostic signals from the global evolution, for analyzing the sub-flow dynamics.  
    \item We demonstrate the usage of these tools in the domain of phoneme-level pronunciation assessment using a well-known CNF-based speech generative model.
    \item We visualize and discuss the geometric properties of the learned vector field.
\end{itemize}

\section{Background and Related Work}
\label{sec:background}
In this section, we briefly introduce the continuous normalizing flow (CNF) which is essential for understanding the methods.
Consider observations $x$ in a $D$-dimensional data space $\mathbb{R}^D$ distributed according to an unknown data distribution $q_{\text{data}}(x)$.
CNF-based models learn a continuous transformation with the probability path $p_t(x)$ over flow-time $t\in[0,1]$ from a Gaussian noise source $p_0(x) =\mathcal{N}(x; 0,I_D)$ to the approximated data density $p_1(x) \approx q_{\text{data}}(x)$.  
This transformation corresponds to a flow $\phi(x,t): \mathbb{R}^D\times \mathbb{R} \rightarrow \mathbb{R}^D$ which is generated by a time-dependent velocity field $v(x,t): \mathbb{R}^D\times \mathbb{R} \rightarrow \mathbb{R}^D$.
This process can be described by a time-dependent random variable $X_t$.

Each initial value $x_0 \in \mathbb{R}^D \sim X_{t=0}$ corresponds to a specific trajectory~(streamline) $\x(t)= \phi(x_0,t)$\footnote{Throughout this paper, we distinguish between the spatial coordinate $x_t \in \mathbb{R}^D$ for the global Eulerian view of the flow and the time-dependent trajectory $\x(t)$ for the Lagrange view of the flow.}.
The relation between the points along the trajectory and their velocity vectors $v(\x(t),t)$ can be expressed using the following ordinary differential equation~(ODE):
\begin{equation}
    \label{eqa:vf}
    \frac{d \x(t)}{dt} = v(\x(t),t).
\end{equation} 
The CNF parametrizes the velocity vector field using a neural network~(neural ODE) $v_\theta(\x(t),t)$ dependent on the parameters $\theta$.
Given a trained CNF and a numerical ODE-solver, e.g., Euler or RK4, sampling from the data distribution can be approximated by solving the initial value problem~(IVP) of Eq.\ref{eqa:vf} forwards in time:
\begin{equation}
   x_1 =  x_0 + \int_0^1v_{\theta}(\x(t),t)dt,
\end{equation}
where $x_0 \sim p_0(x)$ is the initial value sampled from the Gaussian noise.

Under the assumption of Lipschitz continuity of the neural ODE, the CNF can run backwards in time from data $\x(1)$  to noise $\x(0)$ and the rule of instantaneous change of variables~(ICV)~\cite{odenet} can be applied for estimating the likelihood. If we call $J(t) = \frac{\partial v_\theta(\x(t),t)}{\partial \x(t)}$ the Jacobian of the ODE along the trajectory, the log likelihood becomes:
\begin{equation}
    \label{eqa:loglikelihood}
    \log p_1(x =\x(1))=\log p_0(x=\x(0)) - \int_{0}^{1}\Tr\left(J(t)\right)dt,
\end{equation}
where $\Tr\left(J(t)\right)$ is the trace of the Jacobian.
The trace can be efficiently approximated by the unbiased Hutchinson’s trace estimator~\cite{hut}:
\begin{equation}
    \label{eqa:hut}
    \Tr\left(J(t)\right) = E_{\epsilon}[\epsilon^T J(t) \epsilon],
\end{equation}
where $\epsilon$ is typically a vector of Gaussian noise and the expectation is computed using Monte Carlo approximation with $N$ samples of $\epsilon$ (usually $N<10$). 
Finally, training the CNF can be done by directly maximizing the log-likelihood of the data drawn from  $q_{\text{data}}(x)$~\cite{ffjord}. 

Unlike traditional CNF training which requires back-propagating through an ODE solver, modern CNF models can be trained efficiently using the flow-matching (FM) objective~\cite{FM}, following the trajectory of the optimal transport (OT) path between a sampled pair ($x_0,x_1$), defined as:
\begin{equation}
    \label{eqa:OT-path}
    \x(t) = (1-t)x_0 + tx_1.
\end{equation}
As we can see, the OT path is a simple linear interpolation between the initial point $x_0$ and the data point $x_1$ and it has a constant time derivative $v(\x(t), t)  = x_1 - x_0 $ along the trajectory.
The objective of FM is then to minimize the expected $L_2$ distance between the velocity field $v_\theta$ estimated by the neural network and the one calculated along the OT path:
\begin{equation}
    \mathcal{L}_{\text{FM}}(\theta) = \mathbb{E}_{t, x_0, x_1} \left[ \left\| v_\theta(\x(t), t) - (x_1 - x_0) \right\|^2 \right]
\end{equation}
where $t \sim \mathcal{U}[0, 1]$, $x_0 \sim p_0(x)$, and $x_1 \sim q_{\text{data}}(x)$.
This improves training stability and reduces the dependence on precise numerical integration~\cite{FM}.

\section{Method}
\label{sec:method}
\subsection{The Lagrangian Sub-Flow Framework}
\label{sec:subflow}
CNF-based generative models parameterize a continuous vector field that transports a latent Gaussian distribution $p_0$ toward the target data distribution $p_1$. 
In high-dimensional generation tasks, modern architectures often rely on global interaction mechanisms, e.g., self-attention in Transformers, which cause the feature dimensions to become intricately entangled. 
This dense coupling ensures an expressive representation but simultaneously creates a significant challenge for OOD detection.

As described in Section~\ref{sec:intro}, we are interested in methods that can focus the OOD detection on a subset of components in the representation, while using the other components as context.
To isolate local content from the global context, we propose the Lagrangian sub-flow (LSF) framework. 
This framework allows for a post-hoc diagnostic analysis of the sub-flow while keeping the original evolution of the global flow unchanged. 

\subsubsection{The Coupled Sub-Flow and Inter-Dimensional Density Leakage}
\label{sec:coupled_sub-flow}
Consider an ambient continuous normalizing flow $\phi(x,t): \mathbb{R}^D\times \mathbb{R} \rightarrow\mathbb{R}^D$ governed by the velocity field $v(x,t)$.
To isolate the local properties of this ambient flow, we decompose the ambient space into a target sub-space $\mathcal{S}$ and its complementary space $\mathcal{C}$, such that $\mathbb{R}^D = \mathcal{S} \times \mathcal{C}$.  
We define the probability path for the sub-flow space $\mathcal{S}$ while following a specific trajectory $\ct$ in the complementary space as: 
\begin{equation}
    p_{\ct}(x_\mathcal{S}) = p_t(x_S,  x_C=\ct).
    \label{eq:subflow_pdf}
\end{equation}
In this way, we respect the authentic instantaneous contextual condition $\ct$ arising from the joint evolution of the global flow while focusing on the change of the density over flow-time $t$ within the target sub-space only.
Note that the definition in Eq.~\ref{eq:subflow_pdf} is an unnormalized conditional density, where the normalization factor is intractable to compute. 
This problem can be mitigated by introducing the likelihood-ratio which will be discussed in Sec.\ref{sec:experiments:pronunciation}.
A more critical problem of this definition is the fact that the sub-space density function is influenced by the evolution of the context $\ct$ in the complementary space, which we refer to herein as ``inter-dimensional density leakage''.

The ambient flow $\phi$ represents an autonomous fluid system. 
Thus, its density $p_t(x)$ satisfies the standard continuity equation~(the  weak form of the law of conservation of mass) for any $x\in \mathbb{R}^D$:
\begin{equation}
    \label{eqa:continue}
    \frac{\partial p_t(x)}{\partial t} = - \nabla_x \cdot (p_t(x)\cdot v),
\end{equation}
that is, the partial derivatives of the PDF with respect to $t$ is equal to the negative divergence of $p_t(x)\cdot v$ at the specific flow-time $t$.
As the sub-space is embedded within the ambient space, the density evolution at any point $x_{\mathcal{S}} \in \mathcal{S}$ must satisfy:
\begin{align}
    \frac{\partial p_{\ct}(x_\mathcal{S})}{\partial t} &= - \nabla_x \cdot [p_t(x_S, x_C=\ct)\cdot v] = \\
                                        &= - \nabla_{x_s} \cdot [p_t(x_S,x_C=\ct)\cdot v_\mathcal{S}] - \underbrace{\nabla_{x_c} \cdot [p_t(x_S,x_C=\ct)\cdot v_{\mathcal{C}}]}_{\text{Inter-dimensional density leakage}}. \label{eqa:leakage}
\end{align}
In a coupled system where $v_{\mathcal{C}}\neq 0$, the second term in Eq.\ref{eqa:leakage} violates the local conservation of mass, rendering the sub-flow non-autonomous.
More specifically, by applying the product rule of differential, the leakage term can be further decomposed into:
\begin{equation}
    \nabla_{x_c} \cdot \left[p_t(x_S,s_C=\ct)\cdot v_{\mathcal{C}}\right] = \underbrace{\nabla_{x_c} \cdot \left[p_t(x_S,x_C=\ct)\right]\cdot v_{\mathcal{C}}}_{\text{Inter-dimensional advection}} + \underbrace{p_t(x_S,x_C=\ct)\cdot \nabla_{x_c} \cdot ( v_{\mathcal{C}})}_{\text{Inter-dimensional compression}},
\end{equation}
where the ``inter-dimensional advection'' term represents the probability mass ``drifting'' out of the the sub-space $\mathcal{S}$ due to transverse global motion~(e.g., rotation or shifting),  while the ``inter-dimensional compression" term represents density fluctuations induced by contraction of dimensions in the complementary space.
Consequently, the sub-flow is heavily entangled with the global evolution, shielding the intrinsic local kinematic signals from direct diagnostic observation.

\subsubsection{Restoration of Autonomy with the Sealed Vector Field}
To enable an atomic diagnostic of a sub-flow while preserving the model's global context-awareness, we define a projected diagnostic field $\hat{\phi}$  and its corresponding vector field $\hat{v}(x,t)$ by nullifying the off-subspace velocity components of the original vector field:
\begin{equation}
\hat{v}_{i}(x,t) =
\begin{cases}
v_{i}(x, t) & v_i \in \mathcal{S} \\
0 & v_i \notin \mathcal{S}.
\end{cases}
\end{equation}
This can be achieved by applying a kinematic sealing $\mathbf{K}_{\mathcal{S}}$ matrix multiplication:
\begin{equation}
\hat{v}(x, t) = \mathbf{K}_{\mathcal{S}} \cdot v(x, t),
\label{eq:projected_field}
\end{equation}
where $\mathbf{K}_\mathcal{S} = \text{diag}(\mathbf{m}_\mathcal{S})$ and $\mathbf{m} \in \{0,1\}^D$ be the binary mask for the sub-space $\mathcal{S}$ also known as indicator function.
By enforcing $v_{C}=0$ at any time $t\in[0,1]$, the inter-dimensional leakage in Eq.~\ref{eqa:leakage} vanishes, thereby restoring the local continuity equation:
\begin{equation}
     \frac{\partial p_{\ct}(x_\mathcal{S})}{\partial t} 
                                        = - \nabla_{x_s} \cdot \left(p_{\ct}(x_\mathcal{S})\cdot v_\mathcal{S}\right).
    \label{eqa:restored}
\end{equation}
This projection acts as kinematic sealing, creating a ``stiff tube'' in the flow that disentangles the sub-flow from density exchange caused by context advection or compression.
Consequently, the mass is preserved within the sub-flow. 

It is critical to emphasize that the kinematic sealing $\mathbf{K_{\mathcal{S}}}$ is implemented as a post-hoc diagnostic operator applied to the pre-generated trajectory according to its original, coupled dynamics.
While the underlying model $v_\theta(x,t)$ continues to perceive the global context $\ct$ which jointly evolves for coherent generation, the virtual vector field $\hat{v}(x,t)$ is used exclusively for diagnostic. 
This formulation allows us to extract the sub-segment as a locally autonomous kinematic system without redefining a separate dynamical evolution system.

\subsubsection{The Lagrangian Sub-Flow via Post-Hoc Kinematic Sealing}
While the restored continuity equation~(Eq.\ref{eqa:restored}) provides an Eulerian view of the sealed sub-system, we seek to quantify the density evolution from a Lagrangian perspective~(per-sample trajectory) to enable sample-wise diagnostics. 
Due to the post-hoc kinematic sealing mechanism, the sample sub-path remains strictly consistent with the projected trajectory of the original flow.
To this end, we define the sub-path flow map $\x_{\mathcal{S}}(t)$ which tracks the trajectory of a specific initial state $\x_{\mathcal{S}}(0) \in \mathcal{S}$ evolving under the sealed vector field.
Similarly to Eq.\ref{eqa:vf}, the sub-trajectory satisfies the ODE: 
\begin{equation}    
\frac{d \x_{\mathcal{S}}(t)}{dt} = \hat{v}_{\mathcal{S}}(\x(t),t),
\end{equation}
where $\x(t)=[\x_{\mathcal{S}}(t), \ct]$ is a state vector point on the ambient trajectory.

\begin{theorem}{[Instantaneous change of variables for sub-flows]}
Given the restored Lagrangian autonomy of the sealed sub-flow, the total derivative of the log-density along the sub-path $\x_{\mathcal{S}}(t)$ is determined by:
\begin{equation}
\frac{d \log p_{\ct}(\bxS(t))}{dt} = -\Tr(\hat{\mathbf{J}}_t(\bxS(t)))
\label{eqa:cov-sub}
\end{equation}
\end{theorem}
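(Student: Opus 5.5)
The plan is to mirror the standard derivation of the instantaneous change of variables (Eq.~\ref{eqa:loglikelihood}), with the restored continuity equation (Eq.~\ref{eqa:restored}) playing the role of Eq.~\ref{eqa:continue}. Here $\hat{\mathbf{J}}_t$ is read as the Jacobian of the sealed field restricted to $\mathcal{S}$, i.e.\ $\hat{\mathbf{J}}_t = \partial \hat v_{\mathcal{S}}(\x(t),t)/\partial \bxS$. Equivalently it is the full Jacobian of $\hat v$ with respect to $x$. Since $\mathbf{K}_\mathcal{S}$ zeroes the $\mathcal{C}$-rows, that full Jacobian is block upper-triangular with a zero $\mathcal{C}\mathcal{C}$ block, so both readings have the same trace $\nabla_{x_S}\cdot v_{\mathcal{S}}$. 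I would state this identification first, because it is what makes the right-hand side well defined.

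\textbf{Step 1: expand the divergence.} Apply the product rule to the right-hand side of Eq.~\ref{eqa:restored}:
\[
\frac{\partial p_{\ct}(x_\mathcal{S})}{\partial t} = -\nabla_{x_S} p_{\ct}(x_\mathcal{S})\cdot v_\mathcal{S} - p_{\ct}(x_\mathcal{S})\,\nabla_{x_S}\cdot v_\mathcal{S}.
\]

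\textbf{Step 2: take the total (Lagrangian) derivative along the sub-path.} Use the chain rule on $t\mapsto p_{\ct}(\bxS(t))$. Here $\bxS(t)$ moves according to $d\bxS/dt=\hat v_{\mathcal{S}}(\x(t),t)=v_{\mathcal{S}}(\x(t),t)$. In the sealed system the context is held as a parameter of the density rather than transported, which is exactly what $v_\mathcal{C}=0$ encodes. This gives
\[
\frac{d}{dt}p_{\ct}(\bxS(t)) = \frac{\partial p_{\ct}}{\partial t} + \nabla_{x_S} p_{\ct}\cdot v_\mathcal{S}.
\]
Substituting Step 1, the advection terms cancel, leaving $\frac{d}{dt}p = -p\,\nabla_{x_S}\cdot v_\mathcal{S}$.

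\textbf{Step 3: pass to the log-density.} Divide by $p_{\ct}(\bxS(t))$, assuming it is positive along the path. Then identify $\nabla_{x_S}\cdot v_\mathcal{S}=\Tr(\hat{\mathbf{J}}_t(\bxS(t)))$, which yields Eq.~\ref{eqa:cov-sub}. As an alternative route, one could follow Chen et al.'s proof instead: apply Jacobi's formula to the sub-flow map's Jacobian $\partial \bxS(t)/\partial \bxS(0)$, whose determinant evolves by $\Tr(\hat{\mathbf{J}}_t)$. This requires the sealed system to be a well-posed ODE on $\mathcal{S}$, which follows from the Lipschitz assumption on $v_\theta$ with $\ct$ treated as a known time-dependent forcing.

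\textbf{Main obstacle.} The delicate point is the bookkeeping in Step 2. The density $p_{\ct}$ depends on $t$ both explicitly and through $\ct$, while $\ct$ actually moves in the original coupled flow. I must argue that Eq.~\ref{eqa:restored} already absorbs this dependence: its left-hand side is the partial derivative at fixed $x_S$ of the conditioned density, with the sealing removing the $\nabla_{x_C}$ terms. Consequently no additional $\nabla_{x_C}p\cdot\dot{\ct}$ term should appear in the total derivative. I would handle this by treating the sealed system as a non-autonomous flow on $\mathcal{S}$ parameterized by $t$ alone, so that $\ct$ is not a separate state variable. Under that convention the chain rule in Step 2 has only the two terms written.
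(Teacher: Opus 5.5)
Your proposal is correct and follows essentially the same route as the paper. Both arguments take the chain rule along the sub-path, substitute the restored continuity equation (Eq.~\ref{eqa:restored}), and expand the divergence by the product rule so that the advection terms cancel. Both then identify the remaining divergence $\nabla_{x_S}\cdot v_{\mathcal{S}}$ with the trace of the sub-space Jacobian.

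The differences are cosmetic: you differentiate $p_{\mathbf{c}(t)}$ and divide at the end, while the paper works with $\log p_{\mathbf{c}(t)}$ from the start. Your remarks on the trace identification and on folding the $\mathbf{c}(t)$-dependence into $\partial_t p_{\mathbf{c}(t)}$ spell out points the paper leaves implicit.
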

where $\hat{\mathbf{J}}_t \in \mathbb{R}^{D_{S} \times D_{S}}$ is the sub-space Jacobian matrix of the sealed vector field $\hat{v}$.

\begin{proof}
By the chain rule, the total derivative of the sliced density $p_{\ct}$ along the path $\x_\mathcal{S}(t)$ is given by:
\begin{align*}
\frac{d \log p_{\ct}(\x_\mathcal{S}(t))}{dt} &= \frac{\partial \log p_{\ct}(\x_\mathcal{S}(t))}{\partial t} + \frac{d (\x_{\mathcal{S}}(t))}{dt} \cdot \nabla_{\bxS} (\log p_{\ct}(\x_\mathcal{S}(t)))\\ 
&= \frac{1}{p_{\ct}}\frac{\partial p_{\ct}}{\partial t} + \hat{{v}}_{\mathcal{S}}\cdot\nabla_{\bxS} (\log p_{\ct}), 
\end{align*}
where in the last passage we have simplified notation ($p_{\ct}(\bxS)=p_{\ct}$).
Substituting the restored local continuity equation~(Eq.\ref{eqa:restored}) into the above expression, we have:
\begin{equation*}
\frac{d \log p_{\ct}}{dt} = \frac{1}{p_{\ct}} \left[ - \nabla_{\bxS} \cdot (p_{\ct} \hat{{v}}_{\mathcal{S}}) \right] + \hat{{v}}_{\mathcal{S}} \cdot \nabla_{\bxS} ( \log p_{\ct})
\end{equation*}
Applying the product rule again for the substituted term:
\begin{equation}
\frac{d \log p_{\ct}}{dt} = - \left( \nabla_{\bxS} \cdot \hat{{v}}_{\mathcal{S}} \right) - \frac{1}{p_{\ct}} (\hat{{v}}_{\mathcal{S}} \cdot \nabla_{\bxS} p_{\ct})+ \hat{{v}}_{\mathcal{S}} \cdot \nabla_{\bxS} (\log p_{\ct})
\end{equation}
The last two terms cancel out and by recognizing that the divergence of the sealed field is the trace of its sub-space Jacobian, we arrive at the final equation in Eq.\ref{eqa:cov-sub}.
\end{proof}

\subsection{Sub-flow Hutchinson’s Trace Estimator}
\label{sec:hut}
The Jacobian matrix $\mathbf{J} \in \mathbb{R}^{D\times D}$ of a transformation reflects first-order differential properties of a function at a point. 
For CNF-based generative models, the instantaneous Jacobian $\mathbf{J}_t = \nabla v(\x(t),t) \in \mathbb{R}^{D\times D}$ contains rich information about the geometric deformation and probabilistic mass change along the trajectory $\x(t)$ under the Lagrangian perspective.
Having established the autonomous sub-flow identities with the sealed vector field $\hat{v}(x,t)$, we now introduce a stochastic tool that efficiently extract diagnostic signals for sub-flow targets from a global CNF.
Additional definitions are available in Appendix~\ref{sec:appendix:stochastic_tools}.

To compute the exact trace of $\mathbf{J} \in \mathbb{R}^{D\times D}$, $D$ separated differential operations are required.
This is unpractical for high-dimensional CNFs parametrized with neural networks.
The Hutchinson’s trace estimator as discussed in Eq.~\ref{eqa:hut} can efficiently estimate the trace using probing vectors $\epsilon$ sampled from Gaussian- or Rademacher-noise.

For the projected sub-flow $\x_\mathcal{S}(t)$ with respect to the sealed vector field $\hat{v}$, we are interested in the trace of the sub-flow Jacobian $\Tr(\hat{\mathbf{J}}_t(\x_{\mathcal{S}}(t)))$.
Due to the vector field's sealed dimensions that have zero partial derivatives on the diagonal elements, the trace of the sub-flow Jacobian is the same as the trace of the full-flow's Jacobian: $\Tr(\hat{\mathbf{J}}_t(\bxS(t))) =\Tr(\hat{\mathbf{J}}_t(\x(t)))$.
Applying the Hutchinson’s equation:
\begin{equation}
     \Tr(\hat{\mathbf{J}}_t(\x_{\mathcal{S}}(t))) =  \mathbb{E}_\epsilon[\epsilon^T \hat{\mathbf{J}}_t(\x(t)) \epsilon].
     \label{eqa:hut-1}
\end{equation}
Recalling the kinematic projection $\mathbf{K}_{\mathcal{S}}$ operator in Eq.~\ref{eq:projected_field}, we take the derivative on both sides obtaining:
\begin{align*}
    \nabla_x \hat{v}(x, t) &= \mathbf{K}_{\mathcal{S}} \nabla_x v(x, t) \mathbf{K}_{\mathcal{S}} \implies
    \hat{\mathbf{J}}_t(\x(t)) = \mathbf{K}_{\mathcal{S}}\mathbf{J}_t(\x(t))\mathbf{K}_{\mathcal{S}}.
\end{align*}
Substituting this identity to Eq.~\ref{eqa:hut-1}:
\begin{align}   
    \Tr(\hat{\mathbf{J}}_t(\x_\mathcal{S}(t))) &= \mathbb{E}_\epsilon[(\epsilon^T \mathbf{K}_{\mathcal{S}})\mathbf{J}_t(\x(t))(\mathbf{K}_{\mathcal{S}} \epsilon)] 
    =  \mathbb{E}_{\hat{\epsilon}}[\hat{\epsilon}^T \mathbf{J}_t(\x(t)) \hat{\epsilon}] 
    \approx \frac{1}{N} \sum_{n=1}^{N}\hat{\epsilon}_n^T \mathbf{J}_t(\x(t)) \hat{\epsilon}_n,
    \label{eq:trace_of_subflow_jacobian}
\end{align}
where $\hat{\epsilon} = \mathbf{K}_{\mathcal{S}} \cdot\epsilon$ is the projected probing vector.
The proposed sub-flow Hutchinson’s Trace Estimator allows estimating the sub-flow Jacobian trace from the Jacobian of the ambient space with only  a few ($N < 10$) Jacobian-vector-products~(JVPs).
This makes the sub-flow diagnostic extremely efficient for high-dimensional neural vector fields\footnote{VJP is often supported by automatic differential systems in modern machine learning frameworks e.g., Pytorch}.

For a sampled sub-trajectory $\x_\mathcal{S}(0) \to \x_\mathcal{S}(1)$, the trace of the instantaneous Jacobian $\mathbf{J}_t(\x(t))$ represents a local contraction~($\Tr < 0$) or expansion~($\Tr >0$) of probability mass along the path.
Alternatively, the sub-flow Hutchinson’s trace estimator can be used to evaluate the density $p_{\ct}(x_\mathcal{S})$ for any sample $x_\mathcal{S}$ in the sub-space directly using Eq.~\ref{eqa:cov-sub}.
An illustration of this is given in Section~\ref{sec:experiments:toy-example}.

\subsection{Metrics for OOD}
\label{sec:metrics_for_ood}
The first metric $\text{LL}=p_1(\bxS(1))$ is the log likelihood of the target sub-representation.
This can be estimated with Eq~\ref{eqa:loglikelihood} where we substitute the full Jacobian with the sub-flow Jacobian from Eq.~\ref{eq:trace_of_subflow_jacobian}, and $p_0(x(0))$ with $p_0(\bxS(0))$.
The metric $\text{PRIOR}=p_0(\bxS(0))$ estimates the sample fit to the distribution in latent space  obtained by whitening the target sub-representation.
For high-dimensional data, the probability mass tends to mainly concentrate on a sphere of radius $\sqrt{D_\mathcal{S}}$.
To correct for this phenomenon, the metric $\text{RAD}=-\frac{1}{D_\mathcal{S}}(||\bxS(0)||^2-D_\mathcal{S})^2$ compares the L2 norm of $\bxS(0)$ to the expected radius.

Because CNFs trained with flow-matching optimal-transport paths are encouraged to generate straight paths for ID samples, we also introduce two heuristic metrics that estimate deviations from this behaviour.
DISP computes the ratio between the displacement of the OT path $|\bxS(1)-\bxS(0)|$ and the whitening path $\int_0^1|v_k(\bxS(t), t)|dt$.
This should be close to 1 for straight paths and lower for OOD samples.
Finally, COS integrates the cosine similarity between the OT path and the whitening path:
\begin{align}
    \text{COS} = \int_0^1 \cos\left[\bxS(t)-\bxS(0), \bxS(1)-\bxS(0)\right]dt.
\end{align}

All the above metrics suffer from the un-normalized nature of the sub-flow probability density discussed in Section~\ref{sec:coupled_sub-flow}.
A commonly adopted solution is to resort to relative measures computed with different models (a dedicated and a background model) or in different conditions (for example using different contextual information).
In the case of LL, for example, this corresponds to the commonly used likelihood ratio between the two conditions.
In Section~\ref{sec:results} we compare the performance of absolute and relative metrics.
The specific definitions of the metrics for our experiments are given in Appendix~\ref{sec:appendix:pronunciation}.

\section{Experiments}
\label{sec:experiments}

\begin{figure}
    \centering
    \includegraphics[width=1\linewidth]{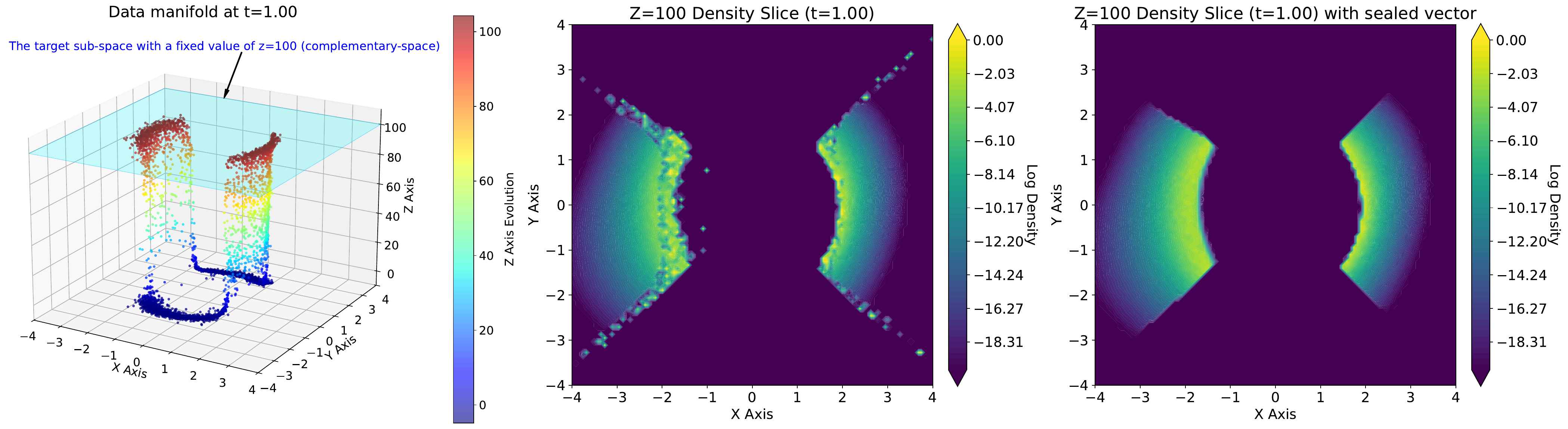}
    \caption{Illustration of density estimation with and without the sealing mechanism.}
    \label{fig:toy-example}
\end{figure}

\subsection{Synthetic Data Illustration}
\label{sec:experiments:toy-example}
In order to exemplify our method, we first use synthetic data from a simple distribution.
The data space is $\mathbb{R}^3$, where the target subspace $\mathcal{S}$ is the $xy$-plane, and the complementary subspace $\mathcal{C}$ is the $z$-axis.
In the $xy$-plane, we train a 2D vector field $v_\theta(x,y)$, using flow matching with optimal transport~(OT) paths from Gaussian noise to the circle $x^2+y^2=4$.
We then force an additional velocity in the $z$-axis according to $v_z= (x^2-y^2)*(z+\textit{const})$ with limit at $v_z=100$.
Fig.~\ref{fig:toy-example}~(left) shows samples from the corresponding data distribution.
If we consider a subspace $\mathcal{S}$ for $z=100$ ($\mathbf{c}(1)$), we can observe that the data lies on a manifold of $\mathcal{S}$.
However, if we try to estimate the density over $\mathcal{S}$ using the traditional trace estimator, additional divergence is introduced by the third dimension due to the compression/expansion of the velocity $v_z$ along its axis.
Consequently, the densities of many points on $\mathcal{S}$ are polluted (Fig.~\ref{fig:toy-example}, center) and the likelihood for out-of-distribution points is often overestimated.
On the other hand, the density on a manifold appears to be smooth and complete with the help of the proposed approach (Fig.~\ref{fig:toy-example}, right). 

\subsection{Real Data: Zero-Shot Mispronunciation Detection with CNF-based Generative Models}
\label{sec:experiments:pronunciation}
Consider a spoken utterance $x_1$ and a transcription \Lcano that contains the sequence of $K$ phonemes in the expected canonical pronunciation of $x_1$.
Phoneme-level Mispronunciation Detection and Diagnostic~(MDD) aims to assess the quality of the realizations for each phoneme in \Lcano, given $x_1$.
This task can be thought of as an instance of OOD detection if we have a probabilistic model of the correct pronunciations.
The spoken utterance is usually represented by a Mel-spectrogram, that is, $x_1 \in \mathbb{R}^{L \times M}$ where $L$ is the number of spectral feature vectors and $M$ is their dimensionality.
A source of complexity for the problem lies in the fact that both $L$ and $K$ are variable for each utterance, and that there is no one-to-one correspondence between the feature vectors and the corresponding phonemic targets.
This problem has been traditionally approached using models for automatic speech recognition (ASR).
Here we propose to use generative models for text-to-speech (TTS) synthesis, instead.
Details on CNF-based models for TTS and the motivation are given in Appendix~\ref{sec:appendix:TTS} and \ref{sec:appendix:pronunciation}.
Here we focus on the way we use the proposed sub-flow method to implement phoneme-level MDD.

The first step in the method is to segment the test utterance using forced alignment driven by an ASR model.
This assigns each feature vector in $x_1$ to a specific phoneme $k$.
Then we define the target sub-space $\mathcal{S}$ as the sequence of feature vectors corresponding to the target phoneme $k$ and the complementary space for all other feature vectors.
This allows us to consider as context any characteristics of the speaker voice, recording channel, and environment that are not related to the correctness of pronunciation.
Finally, we compute the goodness of pronunciation (GOP) using the metrics in Section~\ref{sec:metrics_for_ood} (implementation details in Appendix~\ref{sec:appendix:pronunciation}), and use them to separate correct from incorrect pronunciations.

\subsubsection{Experimental Settings}
We evaluate our methods on the CMU Kids dataset~\cite{CMU_kids} which is a publicly available dataset comprising 9.1 hours of speech~(5,180 utterances) collected from 78 children~(24 male, 54 female) aged 6–11 in U.S. schools. 
The corpus is uniquely valuable due to its phonetic annotations of the erroneous utterances.
By realigning the canonical pronunciations derived from the CMU pronunciation dictionary to the human-labeled phonetic annotations, we obtain phoneme-level labels~(correct/incorrect) for the full corpus.
As a performance metric we report ROC-AUC (area under the curve) and the corresponding 95\% confidence intervals.

We use a publicly available pretrained model for TTS called F5-TTS~\cite{chen-etal-2024-f5tts}.
This is a CNF-based model trained with flow-matching with optimal-transport path on roughly 100,000 hours of diverse speech data (details in Appendix~\ref{sec:appendix:TTS}).  
We use the most active checkpoint~(F5TTS\_v1\_Base/1250000) to generate the flow.
Segmentation of the utterances into phonetic segments is obtained by a mono-phone forced-aligner that we trained using Kaldi~\cite{Povey_ASRU2011}.
We use Euler's method as the ODE-Solver implemented by \cite{odenet} with 32 integral steps.
F5-TTS uses a non-uniform sampling strategy~(sway factor SF) to boost the sampling quality. 
We use the default value of SF=-1.
We also disable the classifier-free guidance~(CFG) used to increase the sample fidelity that is not relevant for MDD. 
Experiments were conducted on a workstation equipped with an AMD Ryzen Threadripper 3960X CPU, 126 GB of physical RAM, and two NVIDIA GeForce RTX 3090 GPUs (24 GB VRAM each).
The software environment was based on a standard GNU/Linux x86\_64 distribution.

\begin{table}
\centering
\caption{Comparison of absolute and relative measures.}
\begin{tabular}{@{}lcccc@{}}
\toprule
\textbf{Method}      & \multicolumn{2}{c}{\textbf{Absolute Measures}} & \multicolumn{2}{c}{\textbf{Relative Measures}} \\ \midrule
                     & \textbf{AUC} & \textbf{95\% Conf} & \textbf{AUC} & \textbf{95\% Conf} \\ \midrule
GOP-LL     & $0.641$   & $5.5\times 10^{-3}$ & $0.668$ & $5.3\times 10^{-3}$ \\
GOP-PRIOR  & $0.545$   & $6.3\times 10^{-3}$ & $0.596$ & $5.9\times 10^{-3}$ \\
GOP-RAD    & $0.578$   & $6.0\times 10^{-3}$ & $0.391$ & $6.6\times 10^{-3}$ \\
GOP-DISP   & $0.525$   & $6.3\times 10^{-3}$ & $0.701$ & $5.0\times 10^{-3}$ \\
GOP-COS    & $0.516$   & $6.3\times 10^{-3}$ & $0.738$ & $4.6\times 10^{-3}$ \\
\midrule
GOP-Codec\cite{valle-cao}          & $0.492$ & $6.4\times 10^{-3}$ & $0.765$ & $4.2\times 10^{-3}$ \\
GOP-GMM\cite{cao23_interspeech}    & -     & -      & $0.723$ & $4.7\times 10^{-3}$ \\
GOP-CTC-SF\cite{cao_3}             & -     & -      & $0.915$ & $2.0\times 10^{-3}$ \\ 
\bottomrule
\end{tabular}

\label{tab:gop_metrics}
\end{table}
\subsection{Results and Discussion}
\label{sec:results}
Table~\ref{tab:gop_metrics} reports the performance of the different metrics for the mispronunciation detection task and both for absolute and relative measurements (see Section~\ref{sec:metrics_for_ood}).
GOP-LL outperforms the GOP-PRIOR by a large margin, which indicates that considering solely the end point $\bxS(0)$ of the whitening process for MDD is not enough. 
This means that, while all the speech segments (correct/incorrect) can be roughly pulled back by the vector field to the isotropic Gaussian noise~(shown by AUC$\approx$0.5), the additional local compression rate of probability mass along with the whitening trajectory $\bxS(t)$ recorded by the Jacobian trace in Eq.~\ref{eqa:cov-sub} contains informative diagnostic signals for the tasks. 
The absolute measure of GOP-RAD seems to mitigate the concentration of mass in the high-dimensional space if compared to GOP-PRIOR.
However, this result does not carry over to the relative versions of the metrics and relative GOP-RAD appears to be inversely correlated with the goodness of pronunciation (AUC$<0.5$).

The likelihood-based approaches (GOP-LL, GOP-PRIOR, GOP-RAD) under-perform if compared to the geometric methods (GOP-DISP, GOP-COS) and especially when comparing to the baseline TTS-based approach GOP-Codec and ASR-based approaches GOP-GMM and GOP-CTC-SF.   
We hypothesize that this is due to but not limited to the following reason.
The sub-flow is defined in the high dimensional space $\mathbb{R}^{L_k \times M}$ while the human speech conditioned on its speech context $C_{sp}$ live in a lower-dimensional manifold. 
As a result, in the Lagrangian perspective, most of the model's energy (e.g., the compressions represented by the Jacobian traces) is consumed to push the noise onto the manifold in order to maintains inter-frame coherence along the trajectory. 
It is hard for the likelihood-based method to extract those fine-grained signal that determines phonetic/phonemic correctness, even though the LSF framework has the effect of purifying the diagnostic signal. 
This hypothesis is consistent with many previous works.
For example, \citeauthor{nalisnick2019do}~\cite{nalisnick2019do} discovered that generative models like flow-based models and VAEs often assign a higher likelihood to simpler OOD samples than to its own training data.
\citeauthor{xiao2020likelihood}~\cite{xiao2020likelihood} suggested to use likelihood-ratio test to tackle this issue, which shares very similar idea to the GOP and to our relative measures which achieves constantly improvement on the AUC throughout the experiments.    
\citeauthor{kirichenko2020why}~\cite{kirichenko2020why} further attributed this phenomenon to the ``inductive bias" of normalizing flows towards local pixel correlations rather than modeling high-level semantic concepts.
In the work of \cite{lelan2021geometric}, the authors proposed to estimate local intrinsic dimensions for Flow-based models and use it as an additional criterion to the likelihood for OOD detection.

\begin{figure}
    \centering
    \includegraphics[width=1\linewidth]{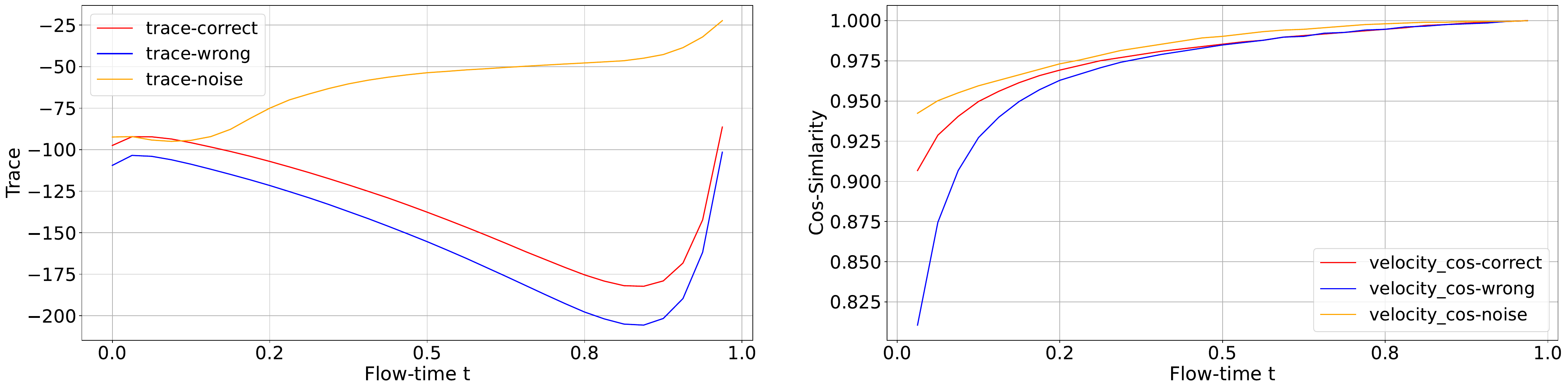}
    \caption{The diagnostic signals extracted from the F5-TTS-based phonetic sub-flow for the target phoneme /AE/ in the utterance for ``cat''.}
    \label{fig:signals}
\end{figure}
The simple geometric-based measures GOP-DISP and GOP-COS appear to be more performantive than likelihood-based methods.
The relative measure of GOP-COS surpasses the traditional GMM-based GOP method with AUC=0.738.
A way to explain this is to consider the example in Fig.~\ref{fig:signals}.
The figure shows the Jacobian trace~(left) and cosine similarity~(right) of the target sub-flow as a function over flow-time $t$ for three samples (correct pronunciation, mis-pronunciation and noise).
This is an example of the ``likelihood paradox'' where the trace for the mis-pronunciation results in a higher likelihood than for the correct pronunciation.
Only the noise sample can be properly distinguished from correct pronunciation in this case.
Conversely, the cos-similarity plot shows a better adherence to the OT path for the correct pronunciation than for the mis-pronunciation, as expected.
  
\subsection{Limitations}
\label{sec:limitations}
This work is a first attempt at using CNF for sub-space OOD detection. 
Our experiments are limited to the specific real-world application of mispronunciation detection.
Although this is a challenging testbed, other application domains should be tested in order to confirm the validity of the method.
For the likelihood methods, the complexity is dictated by the number of samples used in the Jacobian estimation.
For the non-likelihood methods, the predominant factor is the number of evaluations along the flow path.

\vspace{-2mm}
\section{Conclusions}
In this work, we propose a framework for using generative models based on continuous normalized flow (CNF) for out-of-distribution detection tasks.
The method is intended for high-dimensional data, where the detection task is based on a subspace and the rest of the data representation is used as context.
We use concepts from fluid dynamic to enable the definition of diagnostic signals based on the sub-flow corresponding to the target subspace.
Additionally, we provide a numerical tool for performing tractable trace estimation for the sub-flow, allowing for efficient computations.

We test our methods on the challenging task of zero-shot phoneme-based pronunciation assessment and using CNF-based generative models for text-to-speech synthesis.
We show that our method has a comparable performance to traditional methods based on discriminative models for automatic speech recognition (GOP-GMM), although they come short when compared to SOTA discriminative methods (GOP-CTC-SF).
Furthermore, we show that simple geometric analysis of the whitening trajectory surpasses likelihood-based diagnostics in our tests.
This confirms that geometric mismatch is a more sensitive indicator for OOD detection than simple density fluctuations with CNF-based models.

Future work should verify if our results generalize to other application domains, for example in computer vision or sensor systems.

\printbibliography

\clearpage
\appendix

\begin{figure}
    \centering
    \includegraphics[width=0.8\linewidth]{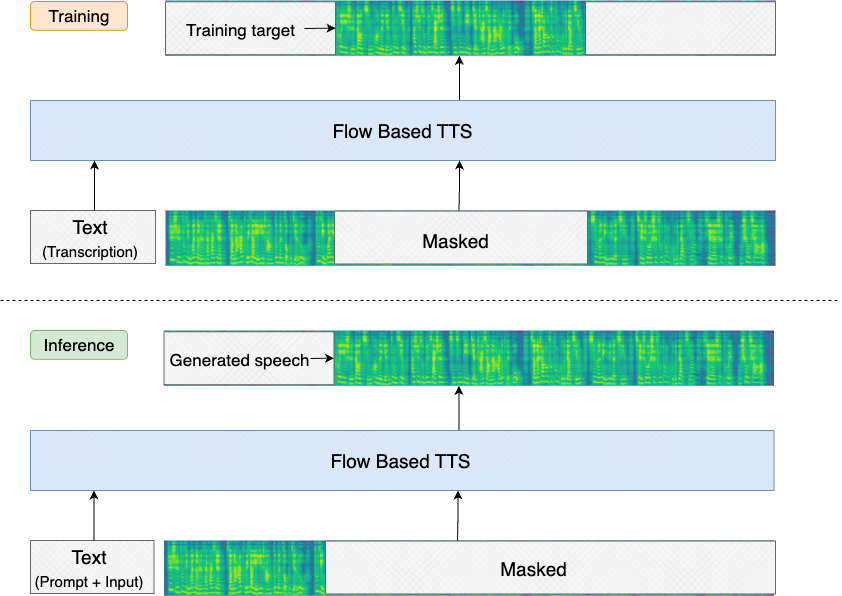}
    \caption{The masking and filling paradigm for flow-based TTS}
    \label{fig:mnf}
\end{figure}

\section{Speech Synthesis Using Continuous Normalizing Flows}
\label{sec:appendix:TTS}
Text-to-speech (TTS) systems use generative speech models for synthesizing human-like speech corresponding to given text inputs~(TI), striving for both fidelity and diversity of the output.
Recently, TTS systems have seen significant improvements thanks to the advances in deep neural generative models.
Among them, the models that are based on CNF~\cite{voicebox,E2,chen-etal-2024-f5tts} have achieved the state-of-the-art performance on various benchmarks.   

TTS systems operate on variable length speech utterances that are typically represented as sequences of feature vectors.
In particular, CNF-based TTS systems usually employ a mel-spectrogram representation where the  dimensionality $M$ of the feature vectors (mel-spectral bins) is fixed, whereas the number of vectors $L$ can vary for each utterance.
The total dimensionality of each observation in the data is then $D=L\times M$.
The flow of TTS is characterized by a time-dependent velocity field that is parametrized by stack of Non-Auto-Regressive~(NAR) transformer layers~\cite{attn, chang2022maskgit}. 
Similarly to the ordinary CNF, a time-dependent 2-D tensor $\mathbf{x}(t)$ represents a trajectory within the flow that transports a sample of Gaussian noise $\x(0)$ to the speech data $\x(1)$.

These models are capable of generating target speech for unseen speakers~(zero-shot TTS) with in-context learning.
This is achieved by providing the neural ODE with additional conditions, i.e., a matched pair of a short speech prompt~(SP) from the target speaker and the corresponding orthographic transcription (text prompt, TP, see Figure~\ref{fig:mnf}).
The system, therefore, has three inputs, a noise sample $x_0$, a speech prompt SP and a concatenation of the text prompt and the text input $\{\text{TP}, \text{TI}\}$.
Both the speech prompt and the text are converted to the same dimensionality $L\times M$, by padding and mapping, resulting in the context matrices $C_\text{sp}$ and $C_\text{text}$.
Finally, the three inputs $x_0, C_\text{sp}$, and $C_\text{text}$ are first stacked and then mapped to the final latent representation by linear mapping.
The network then estimates the velocity field $v_\theta(\x(t), t; C_{\text{sp}},C_{\text{text}})$ for the trajectory $\x(t)$ in the context $C_{\text{sp}},C_{\text{text}}$ that can be plugged in the ODE (Eq.\ref{eqa:vf}).

Under these settings, the training of the model becomes very similar to training a masked language model~(MLM)~\cite{bert} where the model is learned to recover the masked part of speech driven by the conditional vector field $v_\theta(x,t;C)$.
The training and inference modes of CNF-based TTS are illustrated in Fig.\ref{fig:mnf}.

\begin{figure}
    \centering
    \includegraphics[width=\linewidth]{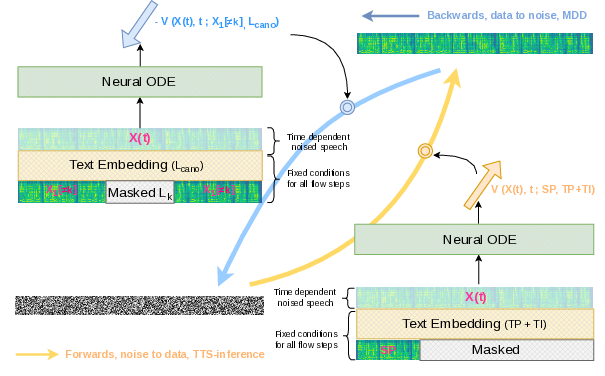}
    \caption{The whitening trajectory for MDD~(backwards) vs. the inference trajectory for TTS~(forwards) of a CNF-based speech generative model.}
    \label{fig:flow}
\end{figure}

\section{Details on Mispronunciation Detection with CNF-based Generative Models}
\label{sec:appendix:pronunciation}
Here we give details on the proposed application of our method to mispronunciation detection that were not included in the paper.

Given a spoken utterance $x_1$ and a transcription \Lcano that contains the sequence of $K$ phonemes in the expected canonical pronunciation, phoneme-level MDD aims to assess the quality of the realizations for each phoneme in \Lcano, given $x_1$.
The spoken utterance is usually represented by a Mel-spectrogram, that is, $x_1 \in \mathbb{R}^{L \times M}$ where $L$ is the number of spectral feature vectors and $M$ is their dimensionality.
The complexity of the problem lies in the fact that both $L$ and $K$ are variable for each utterance, and that there is no one-to-one correspondence between the feature vectors and the corresponding phonemic targets.

To leverage a well-trained CNF-based speech generative model and re-purpose it for phoneme-level MDD, we propose temporal-aware phonetic sub-flow diagnostic~(TPD) based on the reversibility of CNF.
This can be performed in three steps:
\begin{enumerate}
    \item Temporal segmentation: We obtain $K$ ordered segments $x_1 = \{x_1[1],...,x_1[K]\}$ where $x_1[k] \in \mathbb{R}^{L_k \times D}$ denotes the ``$k$''th segment of length $L_k$, using techniques such as monotonic-alignment search or forced alignment.
    \item Whitening: The assessment speech $x_1$ is inversely filtered by running the CNF backwards in time from $t=1 \rightarrow 0$ driven by the conditional vector field $v_k(x,t;C_k)$.
    The conditions $C_k = \{C_{\text{text}} \leftarrow L_{\text{cano}}, C_{\text{sp}} \leftarrow x_1[ \neq k]\}$ are rendered for the ``$k$''th phoneme where $x_1[ \neq k]$ denotes the original speech $x_1$ whose segment $x_1[k]$ is replaced by special mask tokens of the same length.
    \item Projecting the sub-flow: We focus on the local whitening trajectory $\x_k(t)$ and analyze the sub-flow dynamics for MDD using local kinematic diagnostic signals.
\end{enumerate}
It should be noted that the masking applied to the target speech in the second step is critical because we should regard only $x_1[ \neq k]$ as necessary context such that the semantic of masking is consistent with that for generation~(Fig.\ref{fig:mnf}).  
In case of no mask, the distribution of the target segment peaks at the point identical to the provided speech segment. 
The neural ODE is then at the risk of receiving too strong signals from target segment while omitting other conditions for differentiating between good and bad pronunciations.
We illustrate the proposed TPS in Fig.\ref{fig:flow}.

The advantages of TPD can be summarized as follows: 
1. It is a zero-shot MDD approach with a non-discriminative model.
Similarly to ASR-based solutions, this mitigates the data-scarcity problem typical of MDD tasks.
2. We preserve the global context awareness during whitening, while keeping an eye on the local dynamics. 
3. TPD takes advantage of the probabilistic models while at the same time begin totally deterministic via whitening, so no information of the model is lost as ``sampling-comparing" based approaches.
4. Since the analysis of the sub-flow is performed at the latent speech space, no vocoder is needed to reconstruct the waveform which not only simplifies the workflow but also avoids the additional variance brought by phase restoration. 
5. The masking mechanism enables flexible speech assessment for any linguistic-level, by just expanding or reducing the the span of the mask.

\subsection{Mispronunciation Detection Using the Sub-Flow}
In the next section, we specialize the kinematic measures introduced in Section~\ref{sec:metrics_for_ood} for the MDD task.

The phoneme-level MDD can be framed as out-of-distribution~(OOD) detection tasks for each phonemic unit $l_k \in L_{\text{cano}}$, using the corresponding sub-flow trajectory $\x_k(t)$ driven by the sealed vector field $\hat{v}_k(x,t;C_k)$ where $C_k = \{C_{\text{text}} \leftarrow L_{\text{cano}}, C_{\text{sp}} \leftarrow x_1[ \neq k]]\}$.
The first measure comes naturally to our mind is the log-likelihood value evaluated for each segment $x_1[l]$ using the instantaneous change of variables for sub-flows according to Eq.\ref{eqa:cov-sub}:
\begin{equation}
     \text{GOP-LL}(k)=\frac{1}{L_k}\left(\log p_0(\x_k(0)) + \int_0^1-\Tr(\hat{\mathbf{J}}_t(\mathbf{x}_k(t)))dt\right),
\end{equation}
where the sub-flow Hutchinson’s trace estimator introduced in Sec.\ref{sec:hut} can be used to approximate the instantaneous sub-flow Jacobian trace.
Intuitively, the in-domain~(ID) samples should have a higher GOP-LL than OOD samples.

According to the Neyman-Pearson Lemma, the likelihood-ratio~(LR) test is the most powerful test given a fixed significance level.
For this reason, we introduce a second trajectory $\bar{\x}_k(t)$ driven by a relative sealed vector field $\hat{v}_k(x,t;\bar{C_k})$ where $\bar{C}_k = \{C_{\text{text}} \leftarrow L_{\text{NULL}}, C_{\text{sp}} \leftarrow x_1[ \neq k]]\}$ and denote its Jacobian as $\bar{\mathbf{J}}$.
The uninformative text condition $L_{\text{NULL}}$ instructs the neural ODE to generate an average vector field for the given speech context.
The corresponding log-likelihood score is:
\begin{equation}
     \text{GOP-LL-NULL}(k)=\frac{1}{L_k}\left(\log p_0(\bar\x_k(0)) + \int_0^1-\Tr(\bar{\mathbf{J}}_t(\bar\x_k(t)))dt\right).
\end{equation}
GOP-LL-NULL has the effect of normalizing out non-phonemic factors of the realization of the target phoneme when subtracted by the ordinary log-likelihood score.
We define this relative log-likelihood measure as:
\begin{equation}
    \text{GOP-LL-R}(k) =   \text{GOP-LL-NULL}(k) - \text{GOP-LL}(k). 
\end{equation} 
We omit the definitions of this relative scores for the rest of the measures, but keep in in mind that this idea of using relative score is applied throughout all the experiment.

Since the sub-space trajectory $\x_k(t)$ is obtained using whitening, a simpler idea than GOP-LL is to evaluate the initial probability $p_0(\x_k(0))$ as measure for MDD:
\begin{equation}
    \text{GOP-PRIOR}(k) =  \frac{1}{L_k}(\log p_0(\x_k(0)).
\end{equation}
Due to the effect of concentration of mass for high-dimensional data, the probability mass is, in theory, concentrated mainly on a sphere of radius $\sqrt{L_k M}$.
An alternative to GOP-PRIOR for mitigating this problem is then computing the L2 norm of $\x_k(0)$ and comparing to the expected radius:
\begin{equation}
    \text{GOP-RAD}(k) =  -\frac{1}{L_k}(||\x_k(0)||^2-L_k M)^2.
\end{equation}

Since the CNF trained with Flow-Matching on Optimal-Transport path will in principle generate constant-velocity straight trajectories for ID samples, as illustrated Eq.\ref{eqa:OT-path}, we propose several efficient heuristic measures based on simple geometric properties derived from the whitening path $\x_k(t)$ and the corresponding projected vector field $v_k(\x_k,t;C_k)$.

In transport theory, OT paths represent the minimum kinetic energy path for an autonomous dynamical system.
The first of the proposed kinematic measures computes the ratio between displacement of the OT path and the whitening path:
\begin{equation}
    \text{GOP-DISP}(k) = \frac{|\x_k(1)-\x_k(0)|}{\int_0^1|v_k(\x_k(t),t;C_k)|dt}.
\end{equation}
We expect GOP-DISP to be close to one for ID samples while less than one for OOD samples.
As directions might be a more direct implication for straightness, the cos-similarity can also be a normalized measure for detecting OOD samples:
\begin{equation}
    \text{GOP-COS}(k) =  \int_0^1\textbf{cos}\left(\x_k(t)-\x_k(0), \x_k(1)-\x_k(0)\right)dt.
\end{equation}

\section{Additional Stochastic Tools for Diagnostics}
\label{sec:appendix:stochastic_tools}

\subsection{Sub-flow Randomized Directional Derivatives Analysis}
While the sub-flow trace estimator measures the aggregated density compression, it is insensitive to local isochoric distortions, e.g., shear or rotation, where the mass is rearranged but not significantly compressed.
To capture those signals which are potentially relevant for OOD detection and to keep computational efficiency, we propose the sub-flow randomized directional derivatives~(RDD) that analyze more fine-grained geometric deformation of the sub-flow.

Rather than explicitly evaluating the full Jacobian, we probe the sub-flow Jacobian operator and collect the directional responses using N random directions $\{{\epsilon }^{(i)}\}_{i=1}^{N})$:
\begin{equation}
    \hat{\mathbf{J}}_t(\x(t)) \epsilon^{(i)} = \mathbf{K}_{\mathcal{S}}\mathbf{J}_t(\x(t))(\mathbf{K}_{\mathcal{S}} \epsilon^{(i)}) = \mathbf{K}_{\mathcal{S}}{\mathbf{J}}_t(\x(t)) \hat\epsilon^{(i)}.
\end{equation}
Similar to the Sub-flow Hutchinson’s Trace Estimator, the last equation exposes a double-sealing mechanism: the right $\mathbf{K}_{\mathcal{S}}$ aims to ensure the variations in the random tangent vector $\epsilon$ only originate from the targeting sub-space while the left $\mathbf{K}_{\mathcal{S}}$ masks out the off-sub-space dimensions in the Jacobian responses. 
According to the concentration of measure phenomenon in high-dimensional spaces, a small number of random probes are sufficient to capture the essential spectral properties~(principle directions) of the local Jacobian.
These directional responses preserve the orientational information of the local flow.
To detect ODD, we further propose two diagnostic signals based on the orientational information: 
\subsubsection{Temporal Orientational Consistency}
This metric captures the second-order Jacobian stability by computing the average cosine similarity between successive directional responses using the same probe vectors along the streamline path $\x(t)$:
\begin{equation}\beta(t) = \frac{1}{N}\sum_{n \in [1,N]} \cos\left( \hat{\mathbf{J}}_t(\x(t)) \epsilon^{(n)}, \hat{\mathbf{J}}_{t-\Delta t}(\x(t-\Delta t)) \epsilon^{(n)} \right)
\label{beta}
\end{equation}
where $\Delta t$ is the interval between two discrete ODE evaluation steps. 
We hypothesis that the velocity field evolves according to a pre-defined prior for ID samples.
For example, if a CNF is trained with OT path, the Jacobian orientations should remain stable along the trajectory, leading to \(\mathcal{S}_t \approx 1\). 
In contrast, an OOD sample can have rapid re-orientation of the Jacobian on its path such that significant drop in \(\mathcal{S}_{t}\) can be observed, providing a high-sensitivity diagnostic signal of the sub-flow’s intrinsic struggle against the learned prior. 

\subsubsection{Intra-sub-space Spatial Coherence}
Beyond temporal stability, we further investigate the instantaneous coherence across the constituent components of a sub-flow. 
Given that high-dimensional data in CNFs are often represented as organized structures~(e.g., sequences of frames in speech signals or spatial grids in images), we define a metric to evaluate the local alignment of directional responses within these sub-dimensions at each flow time $t$.
Formally, for a sub-flow partitioned into $M$ local units, we measure the spatial coherence of the sub-flow over each component $k$ in the sub-space Jacobian responses denoted as $\hat{\mathbf{J}}^{(k)}_t(\x(t)) \epsilon^{(i)}$: 
\begin{equation}
    \alpha(t) = \frac{1}{N}\sum_{n=1}^{N}
    \left[
    \frac{1}{K-1}\sum_{m=1}^{K-1}\cos\left( \hat{\mathbf{J}}^{(k+1)}_t(\x(t)) \epsilon^{(n)}, \hat{\mathbf{J}}^{(k)}_t(\x(t)) \epsilon^{(n)} \right)
    \right].
    \label{eqa:alpha}
\end{equation}
Our hypothesis is, in a well-conditioned sub-flow, the latent geometry should exhibit harmonic stretching, where local units within the same semantic sub-flow share a consistent deformation trend. 
Conversely, an OOD sample is characterized by internal stochastic divergence across these units. 
This diagnostic signal provides a complementary lens to temporal analysis.

\begin{figure}
    \centering
    \includegraphics[width=1\linewidth]{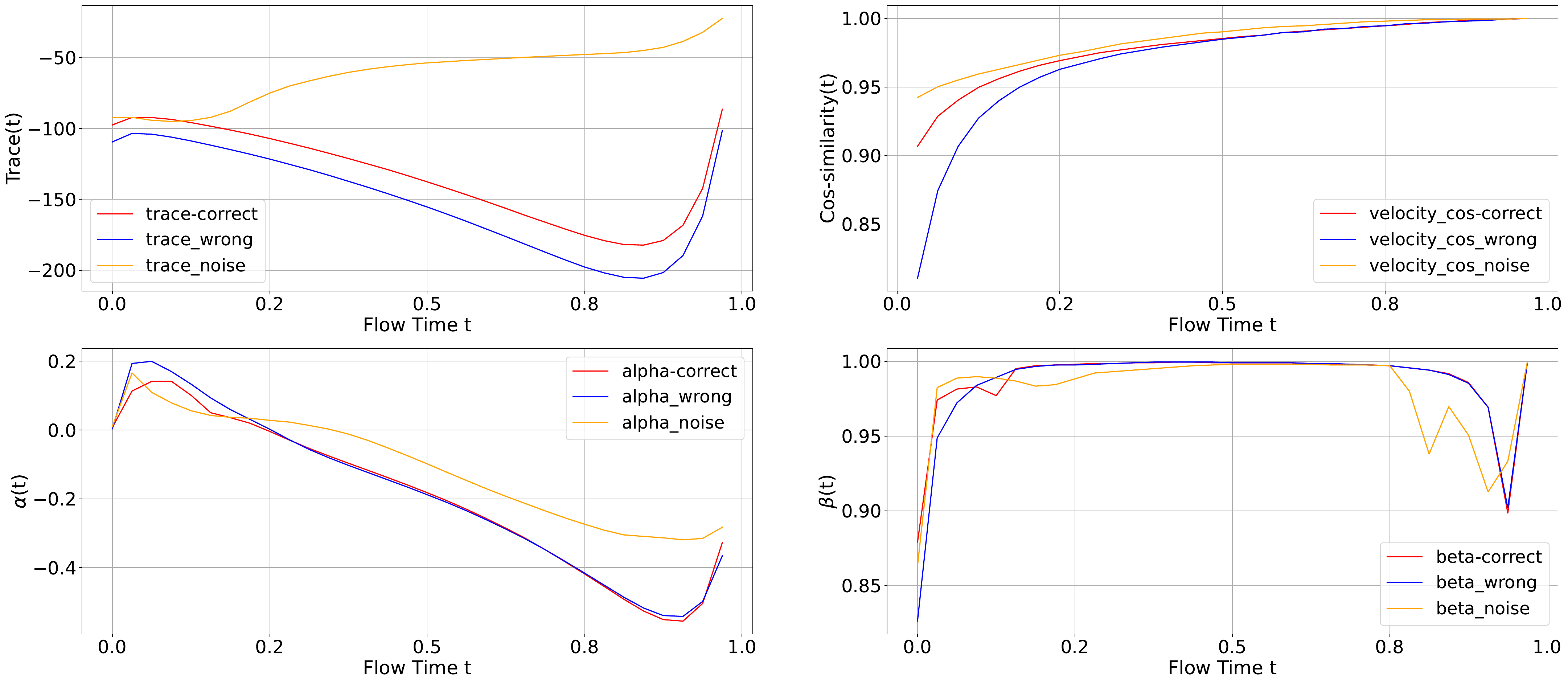}
    \caption{The diagnostic signals extracted from the F5-TTS-based phonetic sub-flow for the target phoneme /AE/ in the utterance for ``cat''.}
    \label{fig:all_signals}
\end{figure}

\subsection{Illustration of Diagnostic Signals}
As an example Figure~\ref{fig:all_signals} shows all the diagnostic signals evaluated on the same examples as in Figure~\ref{fig:signals}.
The illustration demonstrates how these signals provide a rich representation of the flow trajectories that contain far more information than the global averages.
Analyzing these trajectories in detail may potentially lead to improved performance for OOD tasks.
For example, the negative $\alpha(t)$ values in the middle stage of the trajectories show that the neighboring frames are pulled against each other, which is consistent with the previous hypotheses that generative models are modeling background or inter-pixel/frame coherence.
Apart from that, we observed how the initial part of the trajectory often deviates significantly from the rest of the trajectory.
Anecdotal evidence suggests that defining metrics only considering this part of the trajectory may be more informative for ODD detection.
This, however requires further experimental validation.

\end{document}